\newtheorem{theorem}{Theorem}[section]
\newtheorem{proposition}[theorem]{Proposition}
\newtheorem{lemma}[theorem]{Lemma}
\newtheorem{definition}[theorem]{Definition}
\newtheorem{remark}[theorem]{Remark}
\newcommand{\Factor}{F}
\newcommand{\pfl}{L}
\newcommand{\resalpha}{5}
\DeclareMathOperator{\rec}{rec} 
\DeclareMathOperator{\Prefix}{Prf}
\DeclareMathOperator{\Suffix}{Suf}
\DeclareMathOperator{\occur}{occur}
\begin{document}

\title{Construction of a bi-infinite power free word with a given factor and a non-recurrent letter}

\author{Josef Rukavicka\thanks{Department of Mathematics,
Faculty of Nuclear Sciences and Physical Engineering, Czech Technical University in Prague
(josef.rukavicka@seznam.cz).}}

\date{\small{April 16, 2022}\\
   \small Mathematics Subject Classification: 68R15}

\maketitle

\begin{abstract}
Let $L_{k,\alpha}^{\mathbb{Z}}$ denote the set of all bi-infinite $\alpha$-power free words over an alphabet with $k$ letters, where $\alpha$ is a positive rational number and $k$ is positive integer.
We prove that if $\alpha\geq \resalpha$, $k\geq 3$, $v\in L_{k,\alpha}^{\mathbb{Z}}$, and $w$ is a finite factor of $v$, then there are $\widetilde v\in L_{k,\alpha}^{\mathbb{Z}}$ and a letter $x$ such that $w$ is a factor of $\widetilde v$ and $x$ has only a finitely many occurrences in $\widetilde v$.
\\
\\
\noindent
\textbf{Keywords:} Power Free Words; Extension property; Recurrent factor
\end{abstract}

\section{Introduction}
An $\alpha$-\emph{power} of a nonempty word $r$ is the word $r^{\alpha}=rr\cdots rt$ such that $\frac{\vert r^{\alpha}\vert}{\vert r\vert}=\alpha$ and $t$ is a prefix of $r$, where $\alpha>0$ is a rational number. For example $(1234)^{3}=123412341234$ and $(1234)^{\frac{7}{4}}=1234123$.

Suppose a finite or infinite word $w$. Let \[\begin{split}\Theta(w)=\{(r,\alpha)\mid r^{\alpha}\mbox{ is a factor of }w \mbox{ and }r\mbox{ is a nonempty word and }\\ \alpha\mbox{ is a positive rational number}\}\mbox{.}\end{split}\]
We say that $w$ is $\alpha$-\emph{power free} if \[\{(r,\beta)\in\Theta(w)\mid \beta\geq \alpha\}=\emptyset\] 
and we say that $w$ is $\alpha^+$-power free if \[\{(r,\beta)\in\Theta(w)\mid \beta>\alpha\}=\emptyset\mbox{.}\]  
The power free words include well known square free ($2$-power free), overlap free ($2^+$-power free), and cube free words ($3$-power free). In \cite{Rampersad_Narad2007} and \cite{10.1007/978-3-642-22321-1_3}, the reader can find some surveys on the topic of power free words.


In 1985, Restivo and Salemi presented a list of five problems concerning the extendability of power free words \cite{10.1007/978-3-642-82456-2_20}. For the current article,  Problem $4$ and Problem $5$ are of interest:
\begin{itemize} 
\item
Problem $4$:  Given finite $\alpha$-power-free words $u$ and $v$, decide whether there is a transition word $w$, such that $uwu$ is $\alpha$-power free.
\item
Problem $5$: Given finite $\alpha$-power-free words $u$ and $v$, find a transition word $w$, if it exists.
\end{itemize}
In \cite{10.1007/978-3-030-19955-5_27}, a recent survey on the solution of all the five problems can be found. In particular, the problems $4$ and $5$ are solved for some overlap free binary words. In addition, in \cite{10.1007/978-3-030-19955-5_27} the authors prove that: For every pair $(u,v)$ of cube free words over an alphabet with $k$ letters, if $u$ can be infinitely extended to the right and $v$ can be infinitely extended to the left respecting the cube-freeness property, then there exists a “transition” word $w$ over the same alphabet such that $uwv$ is cube free.

Let $\mathbb{N}$ denote the set of positive integers and let $\mathbb{Q}$ denote the set of rational numbers.

\begin{definition}(see \cite[Definition $1$]{10.1007/978-3-030-48516-0_22})
\[
\begin{split}
\Upsilon=
\{(k,\alpha)\mid k\in \mathbb{N}\mbox{ and }\alpha\in \mathbb{Q}\mbox{ and }k=3 \mbox{ and }\alpha>2\}\\ \cup\{(k,\alpha)\mid k\in \mathbb{N}\mbox{ and }\alpha\in \mathbb{Q}\mbox{ and }k>3\mbox{ and }\alpha\geq 2\}\\ \cup\{(k,\alpha^+)\mid k\in \mathbb{N}\mbox{ and }\alpha\in \mathbb{Q}\mbox{ and } k\geq3\mbox{ and }\alpha\geq 2\}\mbox{.}
\end{split}
\]
\end{definition}
\begin{remark}(see \cite[Remark $1$]{10.1007/978-3-030-48516-0_22})
The definition of $\Upsilon$ says that:
If $(k,\alpha)\in \Upsilon$ and $\alpha$ is a ``number with $+$'' then $k\geq 3$ and $\alpha\geq 2$.
If $(k,\alpha)\in \Upsilon$ and $\alpha$ is ``just'' a number then $k=3$ and $\alpha>2$ or $k>3$ and $\alpha\geq 2$.
\end{remark}

In \cite{10.1007/978-3-030-48516-0_22}, it was shown that if  $(k,\alpha)\in\Upsilon$ then we have that:
For every pair $(u,v)$ of $\alpha$-power free words over an alphabet with $k$ letters, if $u$ can be infinitely extended to the right and $v$ can be infinitely extended to the left respecting the $\alpha$-freeness property, then there exists a “transition” word $w$ over the same alphabet such that $uwv$ is $\alpha$-power free. Also it was shown how to construct the word $w$. Less formally said, the results from \cite{10.1007/978-3-030-48516-0_22} solve Problem $4$ and Problem $5$ for a wide variety of power free languages. 

To prove the results in \cite{10.1007/978-3-030-48516-0_22}, the author showed that: If $v$ is a right (left) infinite $\alpha$-power word with a factor $w$ and $x$ is a letter, then there is a right (left) infinite $\alpha$-power free word $\widetilde v$ such that $\widetilde v$ contains $v$ as a factor and $x$ is not recurrent in $\widetilde v$. In other words, $\widetilde v$ has only a finite number of occurrences of the letter $x$. Also the construction of $\widetilde v$ has been presented. The infinite $\alpha$-power free words with the non-recurrent letter $x$ have then been used to construct then transition words. The results were shown for $\alpha$-power free words over an alphabet with $k$ letters, where $(k,\alpha)\in\Upsilon$.

Let 
\[\widetilde\Upsilon=
\{(k,\alpha)\mid k\in \mathbb{N}\mbox{ and }\alpha\in \mathbb{Q}\mbox{ and }k\geq 3 \mbox{ and }\alpha\geq \resalpha\}\mbox{.}\]

In the current article, we generalize the construction of right and left infinite power free words with a non-recurrent letter to bi-infinite power free words. We prove our result for $\alpha$-power free words over an alphabet with $k$ letter, where $(k,\alpha)\in\widetilde \Upsilon$. Note that $\widetilde \Upsilon\subset \Upsilon$. $L_{k,\alpha}^{\mathbb{Z}}$ denote the set of all bi-infinite $\alpha$-power free words over an alphabet with $k$ letters, where $\alpha$ is a positive rational number and $k$ is positive integer. Formally, our main theorem states that if $(k,\alpha)\in\widetilde \Upsilon$, $v\in L_{k,\alpha}^{\mathbb{Z}}$, and $w$ is a finite factor of $v$, then there are $\widetilde v\in L_{k,\alpha}^{\mathbb{Z}}$ and a letter $x$ such that $w$ is a factor of $\widetilde v$ and $x$ has only a finitely many occurrences in $\widetilde v$.

To prove the current result, we apply the ideas from \cite{10.1007/978-3-030-48516-0_22}. 
Let $v_1$ be left infinite $\alpha$-power word with a non-recurrent letter $x$ and let $v_2$ be a right infinite $\alpha$-power word with no occurrence of the letter $x$. By means of some elaborated observations of recurrent factors, we identify a suffix $\widetilde v_1$ of $v_1$ such that if $\widehat v_1$ is a left infinite word with $v_1=\widehat v_1\widetilde v_1$, then $\widehat v_1v_2$ is a bi-infinite $\alpha$-power free word. 

\section{Preliminaries}
Let $\Sigma_k$ denote an alphabet with $k$ letters. 
Let $\Sigma_k^+$ denote the set of all nonempty finite words over $\Sigma_k$, let $\epsilon$ denote the empty word, let $\Sigma_k^*=\Sigma_k^+\cup\{\epsilon\}$, let $\Sigma_k^{\mathbb{N},R}$ denote the set of all right infinite words over $\Sigma_k$, let $\Sigma_k^{\mathbb{N},L}$ denote the set of all left infinite words over $\Sigma_k$, and let $\Sigma_k^{\mathbb{Z}}$ denote the set of all bi-infinite words over $\Sigma_k$.

Let $\Sigma_k^{\infty}=\Sigma_k^{\mathbb{N},L}\cup \Sigma_k^{\mathbb{N},R}\cup \Sigma_k^{\mathbb{Z}}$. We call $w\in \Sigma_k^{\infty}$ an infinite word.

Let $\occur(w,t)$ denote the number of all occurrences of the nonempty factor $t\in \Sigma_k^+$ in the word $w\in \Sigma_k^*\cup\Sigma_k^{\infty}$. If $w\in \Sigma_k^{\infty}$ and $\occur(w,t)=\infty$, then we call $t$ a \emph{recurrent} factor in $w$.

Let $\Factor(w)$ denote the set of all finite factors of a finite or infinite word $w\in \Sigma_k^*\cup\Sigma_k^{\infty}$. The set $\Factor(w)$ contains the empty word and if $w$ is finite then also $w\in \Factor(w)$. Let $\Factor_{\rec}(w)\subseteq \Factor(w)$ denote the set of all recurrent nonempty factors of $w\in \Sigma_k^{\infty}$.

Let $\Prefix(w)\subseteq \Factor(w)$ denote the set of all prefixes of $w\in \Sigma_k^*\cup\Sigma_k^{\mathbb{N},R}$ and let $\Suffix(w)\subseteq \Factor(w)$ denote the set of all suffixes of $w\in \Sigma_k^*\cup\Sigma_k^{\mathbb{N},L}$. We define that $\epsilon\in \Prefix(w)\cap\Suffix(w)$ and if $w$ is finite then also $w\in \Prefix(w)\cap\Suffix(w)$. 

Let $L_{k,\alpha}$ denote an $\alpha$-power free language over an alphabet with $k$ letters, where $\alpha$ is a positive rational number and $k$ is positive integer.
We have that $\pfl_{k,\alpha}=\{w\in\Sigma_k^*\mid w \mbox{ is }\alpha\mbox{-power free}\}$. Let \[\pfl_{k,\alpha}^{\infty}\subseteq\Sigma_k^{\infty}=\{w\in \Sigma_k^{\infty}\mid\Factor(w)\subseteq \pfl_{k,\alpha}\}\mbox{.}\] Thus $\pfl_{k,\alpha}^{\infty}$ denotes the set of all infinite $\alpha$-power free words over $\Sigma_k$. In addition we define $\pfl_{k,\alpha}^{\mathbb{N},R}=\pfl_{k,\alpha}^{\infty}\cap\Sigma_k^{\mathbb{N},R}$, $\pfl_{k,\alpha}^{\mathbb{N},L}=\pfl_{k,\alpha}^{\infty}\cap\Sigma_k^{\mathbb{N},L}$, and $\pfl_{k,\alpha}^{\mathbb{Z}}=\pfl_{k,\alpha}^{\infty}\cap\Sigma_k^{\mathbb{Z}}$; it means the sets of right infinite, left infinite, and bi-infinite $\alpha$-power free words, respectively.

Let $\mathbb{Z}$ denote the set of integers.
We define the \emph{reverse} $w^R$ of a finite or infinite word $w=\Sigma_k^*\cup\Sigma_k^{\infty}$ as follows: \begin{itemize}
\item $\epsilon^R=\epsilon$.
\item If $w\in\Sigma_k^+$ and $w=w_1w_2\cdots w_m$, where $w_i\in \Sigma_k$ and $1\leq i\leq m$, then $w^R=w_m\cdots w_2w_1$. 
\item If $w\in \Sigma_k^{\mathbb{N},L}$ and $w=\cdots w_2w_1$, where $w_i\in \Sigma_k$ and $i\in \mathbb{N}$, then $w^R=w_1w_2\cdots\in \Sigma_k^{\mathbb{N},R}$. \item If $w\in \Sigma_k^{\mathbb{N},R}$ and $w=w_1w_2\cdots$, where $w_i\in \Sigma_k$ and $i\in \mathbb{N}$, then $w^R=\cdots w_2w_1\in \Sigma_k^{\mathbb{N},L}$. 
\item If $w\in\Sigma_k^{\mathbb{Z}}$ and $w=\cdots w_{-2}w_{-1}w_{0}w_{1}w_{2}\cdots$, where $w_{i}\in\Sigma_k$ and $i\in\mathbb{Z}$, then $w^R=\cdots w_2w_1w_0w_{-1}w_{-2}\cdots$.
\end{itemize}

\begin{remark}
It is obvious that the reverse function preserves the power-freeness and that every factor of an $\alpha$-power free word is also $\alpha$-power free.
\end{remark}

The next proposition is a ``reformulation'' of Corollary $1$ from \cite{10.1007/978-3-030-48516-0_22} using only the notation of the current article. 
\begin{proposition}(reformulation of \cite[Corollary $1$]{10.1007/978-3-030-48516-0_22})
\label{nb3rdyd887j}
If $(k,\alpha)\in\Upsilon$, $v\in\pfl_{k,\alpha}^{\mathbb{N},L}$, $z\in\Suffix(v)$, $x\in\Factor_{\rec}(v)\cap\Sigma_k$, $s\in\pfl_{k,\alpha}^{\mathbb{N},L}$, and $x\not\in\Factor(s)$, then there is a finite word $u\in\Sigma_k^*$ such that  $z\in\Suffix(su)$ and $su\in\pfl_{k,\alpha}^{\mathbb{N},L}$.
\end{proposition}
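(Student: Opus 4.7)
The plan is to deduce the statement directly from \cite[Corollary $1$]{10.1007/978-3-030-48516-0_22}, of which it is presented as a reformulation. First I would set up the notational dictionary: the pair $(k,\alpha)\in\Upsilon$, the left-infinite $\alpha$-power-free word $v$, the finite suffix $z\in\Suffix(v)$, the recurrent letter $x\in\Factor_{\rec}(v)\cap\Sigma_k$, and the left-infinite $x$-avoiding companion $s\in\pfl_{k,\alpha}^{\mathbb{N},L}$ each correspond point for point to the hypotheses of the cited corollary. Once this correspondence is verified, the existence of a finite $u\in\Sigma_k^*$ with $z\in\Suffix(su)$ and $su\in\pfl_{k,\alpha}^{\mathbb{N},L}$ is immediate.

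For a reader who wants to see the idea behind the cited corollary, I would sketch the construction as follows. Recurrence of $x$ in $v$ provides, for any finite suffix $z$ of $v$, a finite factor of $v$ of the form $y=xy'$ whose first letter is $x$ and which has $z$ as a suffix; such a $y$ exists because occurrences of $x$ appear arbitrarily far to the left of any fixed position in $v$, so one can shift the cut point past a chosen occurrence of $x$ in $v$ lying to the left of $z$. The task then reduces to producing a finite word $u'\in\Sigma_k^*$ for which $su'xy'\in\pfl_{k,\alpha}^{\mathbb{N},L}$, because setting $u:=u'xy'$ gives $z\in\Suffix(su)$ together with $su\in\pfl_{k,\alpha}^{\mathbb{N},L}$.

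The main obstacle---the part that Corollary $1$ of \cite{10.1007/978-3-030-48516-0_22} actually does---is excluding $\alpha$-powers whose repeating root crosses the junction between $s$ and $xy'$. The asymmetry that $s$ contains no $x$ while $xy'$ begins with $x$ constrains the possible shapes of such a putative crossing power, and the hypothesis $(k,\alpha)\in\Upsilon$ supplies the alphabet size (or the slack in the exponent in the $\alpha^+$ case) needed to interpolate a suitable $u'$. In the present paper this transition step is invoked as a black box, so once the dictionary is verified the proposition requires no further argument; the real work, which is deferred to \cite{10.1007/978-3-030-48516-0_22}, lies in the boundary analysis and the length estimates that make the choice of $u'$ go through.
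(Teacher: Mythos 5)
Your proposal matches the paper exactly: the paper gives no proof of this proposition at all, presenting it purely as a notational reformulation of Corollary~1 of \cite{10.1007/978-3-030-48516-0_22}, so invoking that corollary as a black box after checking the hypothesis-by-hypothesis correspondence is precisely what is intended. Your additional sketch of the underlying construction is a reasonable bonus but not required by (or present in) the paper.
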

\begin{remark}
Proposition \ref{nb3rdyd887j} says that if $z$ a finite power free word that can be extended to left infinite power free word having a letter $x$ as a recurrent factor and $s$ is a left infinite power free word not containing the letter $x$ as a factor, then there is a left infinite power free word containing $z$ as a suffix and having only a finite number of occurrences of $x$.
\end{remark}

The following elementary lemma was shown in \cite{10.1007/978-3-030-48516-0_22}. For the reader's convenience we present the lemma with the proof.
\begin{lemma}(reformulation \cite[Lemma $2$]{10.1007/978-3-030-48516-0_22})
\label{dy77ejhfiffu}
If $k\geq 3$ and $\alpha>2$ then $\pfl_{k-1,\alpha}^{\mathbb{N},R}\not=\emptyset$.
\end{lemma}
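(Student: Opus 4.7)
The plan is to invoke the classical overlap-free Thue--Morse word. Since $k\geq 3$ implies $k-1\geq 2$, the alphabet $\Sigma_{k-1}$ contains at least two distinct letters $a,b$. I would consider the Thue--Morse infinite word $t$ over $\{a,b\}$, defined for example as the fixed point of the morphism $a\mapsto ab$, $b\mapsto ba$ starting from $a$, and view $t$ as an element of $\Sigma_{k-1}^{\mathbb{N},R}$ (using only two of the $k-1$ available letters).

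The first and only substantive step is to recall the theorem of Thue that $t$ is overlap-free, i.e., $t$ contains no factor of the form $rrt'$ with $r$ nonempty and $t'$ a nonempty prefix of $r$. Equivalently, $t$ is $\beta$-power free for every rational $\beta > 2$. Since our hypothesis gives $\alpha > 2$, this immediately yields $\Factor(t)\subseteq \pfl_{k-1,\alpha}$, i.e., $t\in \pfl_{k-1,\alpha}^{\mathbb{N},R}$, so this set is nonempty.

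The only conceptual subtlety is that $L_{k-1,\alpha}^{\mathbb{N},R}$ requires words over an alphabet of size exactly $k-1$, but this is not an obstacle: a word using only two of the $k-1$ letters is still a legitimate element of $\Sigma_{k-1}^{\mathbb{N},R}$, and nothing in the definition of $\pfl_{k-1,\alpha}^{\mathbb{N},R}$ requires that every letter of $\Sigma_{k-1}$ actually occur. There is essentially no main obstacle; the lemma is really a one-line citation of Thue's overlap-freeness of the Thue--Morse sequence, repackaged into the notation of the paper.
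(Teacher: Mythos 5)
Your proposal is correct and follows essentially the same route as the paper: both take the overlap-free Thue--Morse word on two letters (available since $k-1\geq 2$) and observe that overlap-freeness ($2^+$-power freeness) implies $\alpha$-power freeness for any $\alpha>2$. Your write-up is in fact slightly more explicit than the paper's about why overlap-freeness suffices and why using only two of the $k-1$ letters is permissible.
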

\begin{proof}
Thue Morse words are well known overlap-free ($2^+$-power free) right infinite words on two letters \cite{SHALLIT201996}. Let $x\in\Sigma_k$ and let $t\in\pfl_{k,\alpha}^{\mathbb{N},R}$ be a Thue Morse word on two letters such that $x\not\in\Factor(t)$. Since $k\geq 3$, we have that such $t$ exists.
This ends the proof.
\end{proof}

\section{Bi-infinite $\alpha$-power words}

For the rest of the article suppose that $(k,\alpha)\in\widetilde\Upsilon$; it means $k\geq 3$ and $\alpha\geq \resalpha$.

We define two technical sets $\Gamma$ and $\Delta$.
\begin{definition}
Let $\Gamma$ be a set of triples defined as follows. We have that $(w,\eta ,u)\in\Gamma$ if and only if \begin{itemize}
\item $w\in\Sigma_k^+$, $\eta ,u\in\Sigma_k^*$,  and 
\item if $\vert u\vert\leq \vert w\vert$ then $\vert \eta \vert\geq(\alpha+1)\alpha^{\vert w\vert-\vert u\vert}\vert w\vert\mbox{.}$
\end{itemize}
\end{definition}
\begin{definition}
\label{ccn512rf1x}
Let $\Delta$ be a set of $6$-tuples defined as follows. We have that $(s,\sigma ,w,\eta ,x,u)\in\Delta$ if and only if
\begin{enumerate}
\item $s\in\Sigma_{k}^{\mathbb{N}, L}$, $\sigma ,\eta ,u\in\Sigma_k^*$, $w\in\Sigma_k^+$, $x\in\Sigma_k$, 
\item \label{du77bxn21b} $s\sigma w\eta xu\in\pfl_{k,\alpha}^{\mathbb{N},L}$,
\item \label{du87ejh14} $(w,\eta ,u)\in\Gamma$,
\item \label{ddhy7vzlp} $\occur(s\sigma w,w)=1$, and 
\item $x\not\in\Factor(s)\cup\Factor(u)$.
\end{enumerate}
\end{definition}
Given $(s,\sigma ,w,\eta ,x,u)\in\Delta$ and $y\in\Sigma_k$, let 
\[\begin{split}\Pi(s,\sigma ,w,\eta ,x,u,y)=\{(r,\beta)\mid r\in\Sigma_k^+\mbox{ and }\beta\in\mathbb{Q}\mbox{ and }\beta>\alpha \mbox{ and } \\ uy\in\pfl_{k,\alpha} \mbox{ and }r^{\beta}\in\Suffix(s\sigma w\eta xuy)\}
\mbox{.}\end{split}\]
\begin{remark}
Realize that if $\Pi(s,\sigma ,w,\eta ,x,u,y)\not=\emptyset$ then $s\sigma w\eta xuy\not\in\pfl_{k,\alpha}^{\mathbb{N},L}$. 
\end{remark}
The following lemma is obvious from the definitions of $\Pi$ and $\Delta$. We omit the proof.
\begin{lemma}
\label{rry7idxq2}
If $(s,\sigma ,w,\eta ,x,u)\in\Delta$, $y\in\Sigma_k$, $uy\in\pfl_{k,\alpha}$, $y\not=x$, and $\Pi(s,\sigma ,w,\eta ,x,u,y)=\emptyset$ then $(s,\sigma ,w,\eta ,x,uy)\in\Delta$.
\end{lemma}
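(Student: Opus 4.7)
The plan is to verify, one by one, the five conditions of Definition \ref{ccn512rf1x} for the candidate tuple $(s,\sigma,w,\eta,x,uy)$, starting from the fact that $(s,\sigma,w,\eta,x,u)\in\Delta$ supplies them for the old tuple. Conditions (1), (4), and (5) I expect to be essentially bookkeeping: the types in (1) transfer because $u\in\Sigma_k^*$ and $y\in\Sigma_k$ give $uy\in\Sigma_k^*$; the uniqueness requirement (4), $\occur(s\sigma w,w)=1$, does not involve $u$ or $y$ at all; and (5) follows from $x\notin\Factor(s)\cup\Factor(u)$ together with the hypothesis $y\neq x$, which rules out $x$ appearing in $\Factor(uy)$.

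Next I would handle condition (3), namely $(w,\eta,uy)\in\Gamma$. If $|uy|>|w|$ the defining implication is vacuous. Otherwise $|u|<|uy|\leq|w|$, so $(w,\eta,u)\in\Gamma$ gives $|\eta|\geq(\alpha+1)\alpha^{|w|-|u|}|w|$. Because $\alpha\geq\resalpha>1$ and $|w|-|uy|=|w|-|u|-1$, the bound already majorises $(\alpha+1)\alpha^{|w|-|uy|}|w|$, which is what (3) demands.

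The substantive step is condition (2), that $s\sigma w\eta xuy\in\pfl_{k,\alpha}^{\mathbb{N},L}$. I plan to argue by contradiction: a forbidden power $r^\beta$ in $s\sigma w\eta xuy$ cannot lie entirely inside $s\sigma w\eta xu\in\pfl_{k,\alpha}^{\mathbb{N},L}$, nor entirely inside $uy\in\pfl_{k,\alpha}$ (this is precisely where the hypothesis $uy\in\pfl_{k,\alpha}$ is used), so $r^\beta$ must be a suffix of $s\sigma w\eta xuy$ that strictly extends past $uy$ on the left. Such an occurrence witnesses a pair $(r,\beta)$ with $\beta>\alpha$ and $r^\beta\in\Suffix(s\sigma w\eta xuy)$, so $(r,\beta)\in\Pi(s,\sigma,w,\eta,x,u,y)$, contradicting the assumption $\Pi=\emptyset$. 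I do not foresee a genuine obstacle; the whole argument is a direct unpacking of Definition \ref{ccn512rf1x} and of $\Pi$, which is presumably why the author marks the lemma as obvious.
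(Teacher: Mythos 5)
The paper offers no proof of this lemma---it is declared obvious from the definitions of $\Pi$ and $\Delta$---and your argument is exactly the verification the author intends: clauses (1), (3), (4), (5) of Definition~\ref{ccn512rf1x} transfer by direct inspection (your computation for clause (3) using $\alpha>1$ is the right one), and any forbidden power occurring in the left-infinite word $s\sigma w\eta xuy$ either sits inside $s\sigma w\eta xu$ or contains the final letter $y$ and is therefore a suffix of $s\sigma w\eta xuy$, hence an element of $\Pi(s,\sigma,w,\eta,x,u,y)$. The one blemish---that $\alpha$-power-freeness forbids exponents $\beta\geq\alpha$ while $\Pi$ only collects exponents $\beta>\alpha$, so a suffix that is an exact $\alpha$-power would escape both of your cases---is inherited from the paper's own definition of $\Pi$ rather than introduced by your argument.
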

For the rest of this section, suppose that $(s,\sigma ,w,\eta ,x,u)\in\Delta$, $y\in\Sigma_k$, $r\in\Sigma_k^+$, and $\beta\in\mathbb{Q}$ are such that $y\not=x$, $\Pi(s,\sigma ,w,\eta ,x,u,y)\not=\emptyset$, and $(r,\beta)\in\Pi(s,\sigma ,w,\eta ,x,u,y)$.
\begin{remark}
Note that the letter $y$ is such that $s\sigma w\eta xuy$ is not $\alpha$-power free, whereas $s\sigma w\eta xu$ is $\alpha$-power free. 
\end{remark}

We show that $xuy$ is a suffix of $r$.
\begin{lemma}
\label{fu78dnbcgd}
We have that 
$xuy\in\Suffix(r)$.
\end{lemma}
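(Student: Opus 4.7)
The plan is to establish $xuy \in \Suffix(r)$ in three stages: first that $xuy$ is a suffix of $r^{\beta}$, then that $|xuy| \leq |r|$, and finally that $xuy$ coincides with the actual length-$|xuy|$ suffix of $r$.

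For the first stage, I would argue by contradiction. If $|r^{\beta}| < |xuy|$, then since both lengths are integers and $|xuy| = |u| + 2$, we get $|r^{\beta}| \leq |u| + 1 = |uy|$, so $r^{\beta}$ would be a suffix of $uy$. But $uy \in \pfl_{k,\alpha}$ by the defining condition of $\Pi$ while $r^{\beta}$ has exponent $\beta > \alpha$, a contradiction. Hence $xuy$ is a suffix of $r^{\beta}$.

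For the second stage, I would exploit the period-$|r|$ structure of $r^{\beta}$, which the suffix $xuy$ inherits. If $|xuy| > |r|$, then by periodicity the first letter of $xuy$ equals the letter at position $1+|r|$ of $xuy$. Since the first letter is $x$ and $1+|r| \leq |u|+2 = |xuy|$, the position $1+|r|$ lies inside $xuy$: it is either the last position (forcing $x = y$ and contradicting $y \neq x$) or one of the positions of the $u$-block (forcing $x \in \Factor(u)$ and contradicting the condition $x \notin \Factor(u)$ of $\Delta$). Both lead to a contradiction, so $|xuy| \leq |r|$.

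The third stage is the subtle one. I would decompose $r^{\beta} = r^m t$ with $m = \lfloor \beta \rfloor$ and $t$ a prefix of $r$ of length $\beta|r| - m|r|$, and examine how the length-$|xuy|$ suffix of $r^{\beta}$ sits inside this decomposition. Combining the $\alpha$-power-freeness of $s\sigma w\eta xu$ (which forces $\beta - 1/|r| < \alpha$) with the hypothesis $\alpha \geq 5$, I would argue that the length-$|xuy|$ suffix of $r^{\beta}$ must coincide with the true suffix of $r$ of the same length. The main obstacle I anticipate is the case $|t| > 0$: ruling out that $xuy$ could appear as a shifted fragment sitting inside $t$, or straddling the boundary between the last full copy of $r$ and $t$, in a way that fails to align with the actual suffix of $r$. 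This alignment step is where I expect the lower bound $\alpha \geq 5$ to play its essential role.
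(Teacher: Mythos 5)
Your stages 1 and 2 are correct, and together they already recover everything the paper's own proof actually establishes, namely that $xuy\in\Suffix(r^{\beta})$ and that $\vert xuy\vert\leq\vert r\vert$. (The paper reaches the length bound differently: if $\vert r\vert\leq\vert uy\vert$ then the last $\vert r\vert$ letters of $r^{\beta}$ lie inside $uy$, so $x\notin\Factor(r)$, hence $r^{\beta}$ cannot reach past the letter $x$ and must be a suffix of $uy$, contradicting $uy\in\pfl_{k,\alpha}$. Your direct periodicity argument in stage 2 is an equally valid, arguably cleaner, route to the same bound.)

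The genuine gap is your stage 3, which you describe as a plan rather than carry out, and the plan as stated cannot be completed. The claim you intend to prove there --- that the length-$\vert xuy\vert$ suffix of $r^{\beta}=r^{m}t$ (with $t$ a prefix of $r$) coincides with the length-$\vert xuy\vert$ suffix of $r$ --- is not a consequence of $\beta-1/\vert r\vert<\alpha$ and $\alpha\geq\resalpha$. Those constraints pin down $\beta\vert r\vert=\lceil\alpha\vert r\vert\rceil$ but do not force $t=\epsilon$: for instance $\alpha=11/2$ and $\vert r\vert=3$ give $\beta=17/3$ and $\vert t\vert=2$, so under the paper's left-anchored definition $r^{\beta}$ ends in the length-$2$ \emph{prefix} of $r$, and a suffix of $r^{\beta}$ of length at most $\vert t\vert$ need not be a suffix of $r$ at all (with $u=\epsilon$ one gets $xuy=xy=r_1r_2$, while the length-$2$ suffix of $r$ is $r_2r_3\neq xy$ since $r_2=y\neq x$). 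You should know that the paper's own proof is silent on exactly this point: it passes from $\vert r\vert>\vert uy\vert$ directly to $xuy\in\Suffix(r)$, which tacitly treats the fractional power as right-anchored, i.e., with the excess block $t$ sitting at the left end so that $r^{\beta}$ ends in a full copy of $r$. That normalization --- replacing $r$ by the length-$\vert r\vert$ suffix of the periodic suffix, a conjugate of the original $r$ --- is also what the later factorizations rely on (e.g., $r^{\beta-1}r=r^{\beta}$ in Tables \ref{fvstyd6sd78jh} and \ref{vmi99sgtxc78}), and it is the one missing ingredient in both your stage 3 and the paper's final sentence; once $r$ is so normalized, the conclusion follows immediately from your stages 1 and 2 with no further case analysis.
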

\begin{proof}
Realize that if $r\in\Suffix(uy)$ then $x\not\in\Factor(r)$ and consequently $r^{\beta}\in\Suffix(uy)$. This is a contradiction since $uy\in\pfl_{k,\alpha}$. Hence we have that $r\not\in\Suffix(uy)$. It follows that $\vert r\vert>\vert uy\vert$ and hence $xuy\in\Suffix(r)\mbox{.}$
This completes the proof.
\end{proof}
Let $\overline r\in\Sigma_k^*$ be such that $r=\overline rxuy$; Lemma \ref{fu78dnbcgd} asserts that $\overline r$ exists and is uniquely determined. 

We show that if $r^{\beta}$ is shorter than $\eta xuy$ then there is a prefix $\overline \eta $ of $\eta $ such that $s\sigma w\overline \eta xuy$ is $\alpha$-power free.
\begin{proposition}
\label{vv8d9duij33}
If $r^{\beta}\in\Suffix(\eta xuy)$ then 
there is $\overline \eta \in\Prefix(\eta )$ such that \[(s,\sigma ,w,\overline \eta ,x,uy)\in\Delta\mbox{.}\]
\end{proposition}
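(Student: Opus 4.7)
The strategy is to choose $\overline\eta$ so that $s\sigma w\overline\eta xuy$ becomes a right-truncation of the already $\alpha$-power-free word $s\sigma w\eta xu$, hence automatically $\alpha$-power free, while keeping $\vert\overline\eta\vert$ large enough to satisfy the $\Gamma$-condition.

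By Lemma~\ref{fu78dnbcgd} we have $r=\overline r\, xuy$ for the unique $\overline r\in\Sigma_k^*$, and since $r^\beta\in\Suffix(\eta xuy)$ there exists $p\in\Prefix(\eta)$ with $\eta xuy = pr^\beta$. For an integer $m$ with $1\le m\le m_{\max}:=\lfloor(\beta\vert r\vert-1)/\vert r\vert\rfloor$, set
\[\overline\eta \;=\; p\,r^{m-1}\,\overline r.\]
Using the identity $r=\overline r\,xuy$ we obtain $\overline\eta\, xuy = pr^m$, and since $m\vert r\vert\le\beta\vert r\vert-1 = \vert\eta xu\vert$, the word $pr^m$ is a prefix of $\eta xu$. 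Hence $s\sigma w\overline\eta xuy = s\sigma w\cdot pr^m$ is a right-truncation of $s\sigma w\eta xu\in \pfl_{k,\alpha}^{\mathbb{N},L}$, so every factor of $s\sigma w\overline\eta xuy$ is a factor of $s\sigma w\eta xu$ and is therefore $\alpha$-power free; in particular $s\sigma w\overline\eta xuy\in\pfl_{k,\alpha}^{\mathbb{N},L}$. Moreover, $\vert\overline\eta\vert = \vert p\vert+m\vert r\vert-\vert xuy\vert\le\vert\eta\vert$, so $\overline\eta\in\Prefix(\eta)$.

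Of the remaining items in Definition~\ref{ccn512rf1x}, the condition $\occur(s\sigma w,w)=1$ is inherited because $s\sigma w$ is unchanged, and $x\notin\Factor(uy)$ follows from $x\notin\Factor(u)$ and $y\ne x$. Only the $\Gamma$-condition $(w,\overline\eta,uy)\in\Gamma$ requires work, and only when $\vert uy\vert\le\vert w\vert$. Taking $m=m_{\max}$ gives $\vert\overline\eta\vert\ge\vert\eta\vert-\vert r\vert$; combining $\vert\eta\vert\ge(\alpha+1)\alpha^{\vert w\vert-\vert u\vert}\vert w\vert$ (from $(w,\eta,u)\in\Gamma$) with the bound $\vert r\vert\le(\vert\eta\vert+\vert u\vert+2)/\alpha$ (coming from $\beta\vert r\vert\le\vert\eta xuy\vert$ together with $\beta>\alpha$) yields, after a short arithmetic estimate that exploits $\alpha\ge\resalpha$, the desired inequality $\vert\overline\eta\vert\ge(\alpha+1)\alpha^{\vert w\vert-\vert uy\vert}\vert w\vert$. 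The main obstacle is precisely this last arithmetic step: one has to verify that the gap between the two $\Gamma$-bounds is wide enough to absorb the loss of up to $\vert r\vert$ letters caused by shortening $\eta$, and this is exactly where the hypothesis $(k,\alpha)\in\widetilde\Upsilon$ (specifically $\alpha\ge\resalpha$) is used.
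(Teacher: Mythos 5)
Your proof is correct and follows essentially the same route as the paper's: you write $\eta xuy=pr^{\beta}$, truncate to $\overline\eta=pr^{m-1}\overline r$ so that $\overline\eta xuy=pr^{m}$ is a prefix of $\eta xu$ (the paper takes $\overline\eta=zr^{\beta-2}\overline r$, i.e.\ the same removal of one period of $r$), which gives power-freeness for free, and you then verify the $\Gamma$-condition by the bound $\vert\overline\eta\vert\geq\vert\eta\vert-\vert r\vert$ together with $\vert r\vert<\vert\eta xuy\vert/\alpha$ and $\alpha\geq\resalpha$, exactly parallel to the paper's estimate $\vert\overline\eta\vert/\vert\eta\vert\geq(\beta-2)/\beta>3/5\geq 1/\alpha$. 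The only (harmless) difference is your use of an integer exponent $m$ in place of the paper's fractional $\beta-2$, and the final arithmetic you sketch does go through.
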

\begin{proof}
\begin{table}
\centering
\begin{tabular}{|c|c|c|ccccc|}
\hline
$s$ & $\sigma $ & $w$ & \multicolumn{2}{c|}{$\eta $}                                                             & \multicolumn{1}{c|}{$x$} & \multicolumn{2}{c|}{$uy$}  \\ \hline
    &       &     & \multicolumn{1}{c|}{$\overline \eta $}           & \multicolumn{1}{c|}{$xuy\overline r$} & \multicolumn{1}{c|}{}    & \multicolumn{2}{c|}{}  \\ \hline
    &       &     & \multicolumn{1}{c|}{$zr^{\beta-2}\overline r$} & \multicolumn{4}{c|}{$xuyr$}    \\ \hline
    &       &     & \multicolumn{5}{c|}{$zr^{\beta}$}    \\ \hline
\end{tabular}
\caption{Case $\vert r\vert \leq \vert w\vert$ and $\vert u\vert\leq\vert w\vert$}
\label{fvstyd6sd78jh}
\end{table}
Let $z$ be such that $\eta xuy=zr^{\beta}$. Let $\overline \eta =zr^{\beta-2}\overline r$. Then $\sigma w\overline \eta xuy\in\Prefix(\sigma w\eta xu)$ and consequently Property \ref{du77bxn21b} of Definition \ref{ccn512rf1x} implies that \begin{equation}\label{eu7f89fjmmn3}s\sigma w\overline \eta xuy\in\pfl_{k,\alpha}^{\mathbb{N},L}\mbox{.}\end{equation}  
We have that
\begin{align}
\beta\vert r\vert\vert z\vert  &\geq 
(\beta-2)\vert r\vert\vert z\vert \nonumber \\
\implies\quad\beta\vert r\vert\vert z\vert +(\beta-2)\beta\vert r\vert\vert r\vert &\geq 
(\beta-2)\vert r\vert\vert z\vert +(\beta-2)\beta\vert r\vert\vert r\vert \nonumber\\
\implies\quad\left(\vert z\vert +(\beta-2)\vert r\vert\right)\beta\vert r\vert &\geq 
(\vert z\vert +\beta\vert r\vert)(\beta-2)\vert r\vert \nonumber\\
\label{dyuvd888djdu}\implies\quad\frac{\vert z\vert +(\beta-2)\vert r\vert}{\vert z\vert +\beta\vert r\vert} &\geq 
\frac{(\beta-2)\vert r\vert}{\beta\vert r\vert}
\end{align}
It follows from (\ref{dyuvd888djdu}) and $\beta> \alpha\geq \resalpha$ that \begin{equation}\begin{split}\label{ggn78xvcd5}\frac{\vert \overline \eta \vert}{\vert \eta \vert}=\frac{\vert zr^{\beta-2}\overline r \vert}{\vert zr^{\beta}\vert-\vert xuy\vert}\geq\frac{\vert zr^{\beta-2}\vert}{\vert zr^{\beta}\vert}=\frac{\vert z\vert +(\beta-2)\vert r\vert}{\vert z\vert +\beta\vert r\vert}\geq \\ 
\frac{(\beta-2)\vert r\vert}{\beta\vert r\vert}=
 \frac{\beta-2}{\beta}= 1-\frac{2}{\beta}>\frac{3}{5}\mbox{.}\end{split}\end{equation}
From Property \ref{du87ejh14} of Definition \ref{ccn512rf1x},  (\ref{ggn78xvcd5}), and $\frac{3}{5}>\frac{1}{5}\geq\frac{1}{\alpha}$ we have that if $\vert u\vert\leq \vert w\vert$ then \begin{equation}\begin{split}\label{sbvx7778zx3}\vert\overline \eta \vert\geq \frac{3}{5}\vert \eta \vert \geq \frac{1}{\alpha}\vert \eta \vert \geq \frac{1}{\alpha}(\alpha+1)\alpha^{\vert w\vert-\vert u\vert} \vert w\vert> (\alpha+1)\alpha^{\vert w\vert-(\vert u\vert+1)}\vert w\vert\mbox{.}\end{split}\end{equation}
From (\ref{eu7f89fjmmn3}) and (\ref{sbvx7778zx3}) we have that $(w,\eta ,uy)\in\Gamma$ and $(s,\sigma ,w,\overline \eta ,x,uy)\in\Delta$. Realize that Property \ref{du77bxn21b} of Definition \ref{ccn512rf1x} is asserted by (\ref{eu7f89fjmmn3}) and Property \ref{du87ejh14} of Definition \ref{ccn512rf1x} is asserted by (\ref{sbvx7778zx3}). Other properties of Definition \ref{ccn512rf1x} are obvious.
Table \ref{fvstyd6sd78jh} illuminates the structure of the words.

This completes the proof.
\end{proof}

The next lemma shows that if $r$ and $u$ are shorter than $w$ then $r^{\beta}$ is shorter than $\eta xuy$.
\begin{lemma}
\label{ujfdnfg87jdd8v}
If $\vert r\vert \leq \vert w\vert$ and $\vert u\vert\leq\vert w\vert$ then
$r^{\beta}\in\Suffix(\eta xuy)$.
\end{lemma}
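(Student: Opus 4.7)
The plan is to argue by contradiction. I will assume $r^\beta\notin\Suffix(\eta xuy)$, so $\vert r^\beta\vert>\vert\eta xuy\vert$, and then derive a contradiction by producing an upper bound on $\vert r^\beta\vert$ and a lower bound on $\vert\eta xuy\vert$ that are incompatible.

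For the upper bound, the trick is to strip the final letter $y$ off $r^\beta$. A direct unpacking of the definition $r^\gamma=r\cdots rt$ shows that the length-$(\vert r^\beta\vert-1)$ prefix of $r^\beta$ equals $r^{\beta-1/\vert r\vert}$. Because $r^\beta$ is a suffix of $s\sigma w\eta xuy$ ending at the last letter $y$, this prefix is itself a suffix, and hence a factor, of $s\sigma w\eta xu$. Property \ref{du77bxn21b} of Definition \ref{ccn512rf1x} says $s\sigma w\eta xu\in\pfl_{k,\alpha}^{\mathbb{N},L}$, so this factor cannot be an $\alpha$-power; hence $\beta-1/\vert r\vert<\alpha$, giving
\[\vert r^\beta\vert=\beta\vert r\vert<\alpha\vert r\vert+1\leq\alpha\vert w\vert+1,\]
where the last inequality uses the hypothesis $\vert r\vert\leq\vert w\vert$.

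For the lower bound I will invoke Property \ref{du87ejh14} of Definition \ref{ccn512rf1x}, that is, $(w,\eta,u)\in\Gamma$. Since $\vert u\vert\leq\vert w\vert$, the definition of $\Gamma$ gives $\vert\eta\vert\geq(\alpha+1)\alpha^{\vert w\vert-\vert u\vert}\vert w\vert\geq(\alpha+1)\vert w\vert$, so $\vert\eta xuy\vert=\vert\eta\vert+\vert u\vert+2\geq(\alpha+1)\vert w\vert+1$. Using $\vert w\vert\geq 1$, this yields
\[\vert\eta xuy\vert\geq(\alpha+1)\vert w\vert+1>\alpha\vert w\vert+1>\vert r^\beta\vert,\]
which contradicts $\vert r^\beta\vert>\vert\eta xuy\vert$.

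The only nontrivial step is the identity between the length-$(\vert r^\beta\vert-1)$ prefix of $r^\beta$ and $r^{\beta-1/\vert r\vert}$, which is the engine driving the upper bound; once that is in place, the rest is a routine length comparison that uses both hypotheses $\vert r\vert\leq\vert w\vert$ and $\vert u\vert\leq\vert w\vert$ in essential ways.
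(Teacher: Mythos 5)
Your proof is correct and uses the same two ingredients as the paper: an upper bound on $\vert r^{\beta}\vert$ coming from the $\alpha$-power-freeness of $s\sigma w\eta xu$ (the paper phrases this as $\beta\leq\alpha+1$, you as $\beta-1/\vert r\vert<\alpha$), and the lower bound $\vert\eta\vert\geq(\alpha+1)\vert w\vert$ from $(w,\eta,u)\in\Gamma$ with $\vert u\vert\leq\vert w\vert$. The contradiction framing and the slightly sharper one-letter-stripping bound are only cosmetic differences from the paper's direct length comparison.
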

\begin{proof}
Property \ref{du77bxn21b} of Definition \ref{ccn512rf1x} implies that \begin{equation}\label{ccnbdh99873jh}\beta\leq \alpha+1\mbox{.}\end{equation}
From $\vert u\vert\leq\vert w\vert$ and Property \ref{du87ejh14} of Definition \ref{ccn512rf1x} it follows that 
 \begin{equation}\label{uid98kjwke9}\vert \eta \vert\geq (\alpha+1)\vert w\vert\mbox{.}\end{equation}  The lemma follows from  (\ref{ccnbdh99873jh}), (\ref{uid98kjwke9}), and  $\vert r\vert \leq \vert w\vert$. 
This ends the proof.
\end{proof}

We show that if $r^{\beta}$ is longer than $\eta xuy$ then $r$ is longer than $w$.
\begin{lemma}
\label{yhdb76sdhv1a09}
If $r^{\beta}\not\in\Suffix(\eta xuy)$ then
$\vert r\vert >\vert w\vert$.
\end{lemma}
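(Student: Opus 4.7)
The plan is to argue by contrapositive using Lemma~\ref{ujfdnfg87jdd8v}, with the only extra ingredient being a size comparison between $r$ and $u$ that comes for free from Lemma~\ref{fu78dnbcgd}.

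First I would assume, toward a contradiction, that $\vert r\vert\leq \vert w\vert$ and aim to deduce $r^{\beta}\in\Suffix(\eta xuy)$, contradicting the hypothesis. By Lemma~\ref{fu78dnbcgd} we already know $xuy\in\Suffix(r)$, so in particular $\vert r\vert\geq\vert xuy\vert=\vert u\vert+2$, i.e.\ $\vert u\vert\leq\vert r\vert-2$. Combining this with the assumption $\vert r\vert\leq\vert w\vert$ gives $\vert u\vert\leq\vert w\vert-2\leq\vert w\vert$.

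With both inequalities $\vert r\vert\leq\vert w\vert$ and $\vert u\vert\leq\vert w\vert$ in hand, I would invoke Lemma~\ref{ujfdnfg87jdd8v} directly to conclude $r^{\beta}\in\Suffix(\eta xuy)$, which contradicts the standing hypothesis $r^{\beta}\not\in\Suffix(\eta xuy)$. Hence $\vert r\vert>\vert w\vert$.

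There is no real obstacle here: the nontrivial content has already been isolated in Lemma~\ref{ujfdnfg87jdd8v} (which used the $\Gamma$-condition $\vert \eta\vert\geq(\alpha+1)\alpha^{\vert w\vert-\vert u\vert}\vert w\vert$ together with $\beta\leq\alpha+1$) and in Lemma~\ref{fu78dnbcgd} (which forced the suffix $xuy$ to fit strictly inside $r$). The present lemma is just a length bookkeeping step that bridges the two, so the proof is essentially two lines.
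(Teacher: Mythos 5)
Your proof is correct and uses exactly the same two ingredients as the paper's own argument (the contrapositive of Lemma~\ref{ujfdnfg87jdd8v} together with $xuy\in\Suffix(r)$ from Lemma~\ref{fu78dnbcgd}); the paper merely organizes the case analysis as ``$\max\{\vert r\vert,\vert u\vert\}>\vert w\vert$, then rule out $\vert u\vert>\vert w\vert$ forcing $\vert r\vert>\vert w\vert$,'' whereas you first deduce $\vert u\vert\leq\vert r\vert-2$ and run a single contradiction. This is the same proof in slightly cleaner packaging.
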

\begin{proof}
Lemma \ref{ujfdnfg87jdd8v} asserts that if $\vert u\vert,\vert r\vert \geq \vert w\vert$ then $r^{\beta}\in\Suffix(\eta xuy)$.
Thus it follows from $r^{\beta}\not\in\Suffix(\eta xuy)$ that 
\begin{equation}\label{vbnd788fjen2c}\max\{\vert r\vert,\vert u\vert\}>\vert w\vert\mbox{.}\end{equation}
If $\vert u\vert>\vert w\vert$ then from Lemma \ref{fu78dnbcgd} we have that $\vert r\vert > \vert w\vert$. Then the lemma follows from (\ref{vbnd788fjen2c}).
This completes the proof.
\end{proof}

The next lemma shows that $\eta xuy$ contains $r^{\beta-2}$ as a factor and $w\eta xuy$ contains $r^{\beta-1}$ as a factor.
\begin{lemma}
\label{yv2v2vldihj}
If $r^{\beta}\not\in\Suffix(\eta xuy)$ then \[r^{\beta-2}\in\Suffix(\eta xuy)\quad\mbox{ and }\quad r^{\beta-1}\in\Suffix(w\eta xuy)\mbox{.}\]
\end{lemma}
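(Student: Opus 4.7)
The plan is to recast both conclusions as length inequalities on $\beta\vert r\vert$. Indeed, $r^{\beta-2}\in\Suffix(\eta xuy)$ is equivalent to $(\beta-2)\vert r\vert\leq\vert\eta xuy\vert$ and $r^{\beta-1}\in\Suffix(w\eta xuy)$ is equivalent to $(\beta-1)\vert r\vert\leq\vert w\eta xuy\vert$, because $r^\beta$ and each of these finite words are suffixes of the common word $s\sigma w\eta xuy$, so the shorter suffix is always a suffix of the longer one. The key resources are Property \ref{ddhy7vzlp} of Definition \ref{ccn512rf1x} (namely $\occur(s\sigma w,w)=1$), the period-$\vert r\vert$ structure of $r^\beta$, and the inequality $\vert r\vert>\vert w\vert$ from Lemma \ref{yhdb76sdhv1a09}.

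The main step is to establish $(\beta-1)\vert r\vert\leq\vert w\eta xuy\vert$, which I would prove by contradiction. Suppose instead that $\beta\vert r\vert\geq\vert w\eta xuy\vert+\vert r\vert$. Then $r^\beta$, viewed as a suffix of $s\sigma w\eta xuy$, covers both the distinguished occurrence of $w$ sitting immediately to the left of $\eta$ \emph{and} every position shifted from that occurrence by $\vert r\vert$ to the left. Using that $r^\beta$ has period $\vert r\vert$, the factor at the shifted positions must coincide with $w$. This produces a second occurrence of $w$ strictly to the left of the original, yet still inside $s\sigma w$, contradicting Property \ref{ddhy7vzlp}.

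The remaining inequality then follows by subtracting one period of $r$ and applying $\vert r\vert>\vert w\vert$: $(\beta-2)\vert r\vert=(\beta-1)\vert r\vert-\vert r\vert\leq\vert w\eta xuy\vert-\vert r\vert<\vert w\eta xuy\vert-\vert w\vert=\vert\eta xuy\vert$. The delicate point of the argument is the positional bookkeeping in the contradiction: one must check that the hypothesis really places both the original occurrence and its $\vert r\vert$-shifted copy of $w$ inside the window $r^\beta$, and that the shifted copy sits entirely within $s\sigma w$ while being distinct from the original (immediate from $\vert r\vert>0$). Beyond that careful indexing, everything reduces to routine arithmetic.
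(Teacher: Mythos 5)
Your proof is correct and rests on the same ingredients as the paper's: the periodicity of $r^{\beta}$ combined with Property \ref{ddhy7vzlp} ($\occur(s\sigma w,w)=1$) rules out the power extending a full period $\vert r\vert$ past the left end of the distinguished occurrence of $w$, which gives $r^{\beta-1}\in\Suffix(w\eta xuy)$, and $\vert r\vert>\vert w\vert$ from Lemma \ref{yhdb76sdhv1a09} then drops $r^{\beta-2}$ into $\eta xuy$. The only difference is organizational: the paper splits into the cases $r^{\beta}\in\Suffix(w\eta xuy)$ and $r^{\beta}\not\in\Suffix(w\eta xuy)$, whereas you handle everything uniformly with a single contradiction argument phrased as a length inequality.
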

\begin{proof}
We distinguish two cases:
\begin{itemize}
\item $r^{\beta}\in\Suffix(w\eta xuy)$. Then from Lemma \ref{yhdb76sdhv1a09} it follows that $r^{\beta-1}\in\Suffix(\eta xuy)$ and consequently also  $r^{\beta-2}\in\Suffix(\eta xuy)$.
\item $r^{\beta}\not\in\Suffix(w\eta xuy)$. Then Lemma \ref{yhdb76sdhv1a09} implies that $w\in\Factor(rr)$ and in consequence Property \ref{ddhy7vzlp} of Definition \ref{ccn512rf1x} implies that $r^{\beta-1}\in\Suffix(w\eta xuy)\mbox{.}$
Then from Lemma \ref{yhdb76sdhv1a09} we have that $r^{\beta-2}\in\Suffix(\eta xuy)\mbox{.}$
\end{itemize}
This completes the proof.
\end{proof}

We show that if $r^{\beta}$ is longer than $\eta xuy$ then there is a prefix $\overline \eta $ of $\eta $ such that $s\sigma w\overline \eta xuy$ is $\alpha$-power free.
\begin{proposition}
\label{nj8b3vdjlso}
If $r^{\beta}\not\in\Suffix(\eta xuy)$ then
there is $\overline \eta \in\Prefix(\eta )$ such that \[(s,\sigma ,w,\overline \eta ,x,uy)\in\Delta\mbox{.}\]
\end{proposition}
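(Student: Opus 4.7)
The plan is to mirror the construction of Proposition \ref{vv8d9duij33}, building $\overline\eta\in\Prefix(\eta)$ by removing copies of $r$ from within $\eta$ and then verifying the $\Delta$-membership conditions. The new obstacle compared to Proposition \ref{vv8d9duij33} is that $r^{\beta}$ no longer sits inside $\eta xuy$, so the clean decomposition $\eta xuy=zr^{\beta}$ used there is unavailable. Lemmas \ref{yhdb76sdhv1a09} and \ref{yv2v2vldihj} partially compensate by providing $\vert r\vert>\vert w\vert$, $r^{\beta-1}\in\Suffix(w\eta xuy)$, and $r^{\beta-2}\in\Suffix(\eta xuy)$, which will serve as the replacement tools.

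The construction splits naturally into two subcases that mirror the case analysis in the proof of Lemma \ref{yv2v2vldihj}. If $r^{\beta}\in\Suffix(w\eta xuy)$, I would write $w\eta xuy=zr^{\beta}$, where $r^{\beta}\notin\Suffix(\eta xuy)$ forces $z$ to be a proper prefix of $w$, and define $\overline\eta$ to be the unique word with $w\overline\eta xuy=zr^{\beta-1}$. If instead $r^{\beta}\notin\Suffix(w\eta xuy)$, I would use $r^{\beta-1}\in\Suffix(w\eta xuy)$ to write $w\eta xuy=zr^{\beta-1}$ with $\vert z\vert<\vert r\vert$, and define $\overline\eta$ to be the unique word with $w\overline\eta xuy=zr^{\beta-2}$. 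In either subcase, $\vert r\vert>\vert w\vert$ together with $\beta>\alpha\geq\resalpha$ ensures that $w$ remains a prefix of $w\overline\eta xuy$, so $\overline\eta$ is a well-defined prefix of $\eta$ ending just before $xuy$.

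The $\Delta$-membership then falls out along the lines of Proposition \ref{vv8d9duij33}. For Property \ref{du77bxn21b}, $\sigma w\overline\eta xuy$ is a prefix of $\sigma w\eta xu$ (a length comparison leaves at least $\vert r\vert-1\geq 0$ of slack), so $s\sigma w\overline\eta xuy\in\pfl_{k,\alpha}^{\mathbb{N},L}$; Property \ref{ddhy7vzlp} is inherited verbatim since $s\sigma w$ is unchanged, and $x\notin\Factor(uy)$ follows from $x\notin\Factor(u)$ and $y\neq x$. The one genuine computation is the length bound in Property \ref{du87ejh14}. In both subcases $\vert\eta\vert-\vert\overline\eta\vert=\vert r\vert$, and the inequality $(\beta-1)\vert r\vert\leq\vert w\eta xuy\vert$ combined with $\beta>\alpha\geq\resalpha$ and the bound supplied by $(w,\eta,u)\in\Gamma$ yields $\vert\overline\eta\vert/\vert\eta\vert\geq 1/\alpha$ by a ratio argument in the spirit of (\ref{dyuvd888djdu})--(\ref{sbvx7778zx3}); this suffices to pass from $\vert\eta\vert\geq(\alpha+1)\alpha^{\vert w\vert-\vert u\vert}\vert w\vert$ to $\vert\overline\eta\vert\geq(\alpha+1)\alpha^{\vert w\vert-\vert uy\vert}\vert w\vert$. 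I expect the main obstacle to be bookkeeping rather than a genuinely new idea: handling the two subcases uniformly and extracting the ratio estimate cleanly when $r^{\beta}$ may extend beyond $w\eta xuy$ into $\sigma$.
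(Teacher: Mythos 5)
Your proposal is correct and takes essentially the same route as the paper: the paper also deletes exactly one copy of $r$ from the tail (writing $zr^{\beta-1}=w\eta xuy$ via Lemma \ref{yv2v2vldihj} and setting $w\overline\eta xuy=zr^{\beta-2}$), and then derives the $\Gamma$-bound from $\vert\overline\eta\vert\geq\vert\eta\vert-\vert r\vert$ together with the lower bound on $\vert\eta\vert$ in terms of $\vert r\vert$ supplied by Lemmas \ref{fu78dnbcgd} and \ref{yv2v2vldihj}. Your two subcases produce the same $\overline\eta$ that the paper obtains uniformly, so the difference is only organizational.
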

\begin{proof}
From Lemma \ref{fu78dnbcgd} and Lemma \ref{yv2v2vldihj} we have that \begin{equation}\label{dui8ff9ekj}\vert \eta\vert\geq \vert r^{\beta-3}\vert\mbox{.}\end{equation}
From $r^{\beta}\not\in\Suffix(\eta xuy)$ it follows that \begin{equation}\label{djk8f9erk54d}\vert\eta\vert\leq \vert r^{\beta}\vert\mbox{.}\end{equation}
Let $z\in\Sigma_k^*$ be such that $zr^{\beta-1}=w\eta xuy\mbox{}$ and let $\overline \eta\in\Prefix(\eta)$ be such that $w\overline \eta xuy=zr^{\beta-2}$.
Lemma \ref{yv2v2vldihj} implies that $z,\overline \eta $ exist and are uniquely determined. 
Clearly we have that \begin{equation}\label{hnc87gfhjj8}\vert \overline \eta\vert\geq \vert \eta\vert-\vert r\vert\mbox{.}\end{equation}
Since $\beta>\alpha\geq \resalpha$, it follows from (\ref{dui8ff9ekj}), (\ref{djk8f9erk54d}), and (\ref{hnc87gfhjj8}) we have that
\begin{equation}\label{pce5terdu}\frac{\vert\overline \eta\vert}{\vert\eta\vert}\geq \frac{\vert\overline \eta\vert}{\vert r^{\beta}\vert}\geq\frac{\vert \eta\vert-\vert r\vert}{\vert r^{\beta}\vert}\geq\frac{\vert r^{\beta-3}\vert-\vert r\vert}{\vert r^{\beta}\vert}=\frac{\vert r^{\beta-4}\vert}{\vert r^{\beta}\vert}=\frac{\beta-4}{\beta}>\frac{1}{5}.\end{equation}

From Property \ref{du87ejh14} of Definition \ref{ccn512rf1x},  (\ref{pce5terdu}), and $\frac{1}{5}\geq\frac{1}{\alpha}$ we have that if $\vert u\vert\leq \vert w\vert$ then \begin{equation}\begin{split}\label{dyuejh77d8f}\vert\overline \eta \vert\geq \frac{1}{5}\vert \eta \vert \geq \frac{1}{\alpha}(\alpha+1)\alpha^{\vert w\vert-\vert u\vert} \vert w\vert> (\alpha+1)\alpha^{\vert w\vert-(\vert u\vert+1)}\vert w\vert\mbox{.}\end{split}\end{equation}

It is clear that $w\overline \eta xuy\in\Prefix(w\eta xu)$, hence $s\sigma w\overline \eta xuy\in\pfl_{k,\alpha}^{\mathbb{N},L}$. Then it is easy to verify that $(s,\sigma ,w,\overline \eta ,x,uy)\in\Delta$. Note that Property \ref{du87ejh14} of Definition \ref{ccn512rf1x} follows from (\ref{dyuejh77d8f}).

Table \ref{vmi99sgtxc78} illuminates the structure of the words for this case.
\begin{table}
\centering
\begin{tabular}{|cccccccc|}
\hline
\multicolumn{1}{|c|}{$s$} & \multicolumn{1}{c|}{$\sigma $} & \multicolumn{1}{c|}{$w$} & \multicolumn{2}{c|}{$\eta $}                                                   & \multicolumn{1}{c|}{$x$} & \multicolumn{2}{c|}{$uy$}  \\ \hline
\multicolumn{1}{|c|}{}    & \multicolumn{1}{c|}{}      & \multicolumn{1}{c|}{}    & \multicolumn{1}{c|}{$\overline \eta xuy$} & \multicolumn{1}{c|}{$\overline r$} & \multicolumn{1}{c|}{}    & \multicolumn{2}{c|}{}  \\ \hline
\multicolumn{1}{|c|}{}    & \multicolumn{1}{c|}{}      & \multicolumn{2}{c|}{$zr^{\beta-2}$}                                & \multicolumn{4}{c|}{$r$}   \\ \hline
\multicolumn{1}{|c|}{}    & \multicolumn{1}{c|}{}      & \multicolumn{6}{c|}{$zr^{\beta-1}$}        \\ \hline
\multicolumn{8}{|c|}{$\cdots r^{\beta}$}    \\ \hline
\end{tabular}
\caption{Case $r^{\beta}\not\in\Suffix(\eta xuy)$}
\label{vmi99sgtxc78}
\end{table}
This completes the proof.
\end{proof}

A consequence of the previous two propositions is that in every case there is a prefix $\overline \eta $ of $\eta $ such that $s\sigma w\overline \eta xuy$ is $\alpha$-power free.
\begin{lemma}
\label{dyu77bvqsj}
There is $\overline \eta \in\Prefix(\eta )$ such that $(s,\sigma ,w,\overline \eta ,x,uy)\in\Delta$.
\end{lemma}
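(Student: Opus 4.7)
The plan is to argue by a trivial case split on the suffix $r^{\beta}$ of $s\sigma w\eta xuy$: either $r^{\beta}\in\Suffix(\eta xuy)$ or it is not. The two cases are mutually exclusive and exhaustive, and each has already been settled by one of the two preceding propositions, so the lemma reduces to assembling them.

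First I would invoke the standing hypothesis for this section, namely that $(s,\sigma,w,\eta,x,u)\in\Delta$, $y\neq x$, and a witnessing pair $(r,\beta)\in\Pi(s,\sigma,w,\eta,x,u,y)$ has been fixed; in particular $r^{\beta}$ is a concrete suffix of $s\sigma w\eta xuy$. Then I would simply observe: if $r^{\beta}\in\Suffix(\eta xuy)$, then Proposition \ref{vv8d9duij33} directly hands over an $\overline\eta\in\Prefix(\eta)$ with $(s,\sigma,w,\overline\eta,x,uy)\in\Delta$; and in the complementary case $r^{\beta}\not\in\Suffix(\eta xuy)$, Proposition \ref{nj8b3vdjlso} yields exactly the same conclusion.

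There is no genuine obstacle left at this stage, because the two key fractional estimates $\frac{\beta-2}{\beta}>\frac{3}{5}$ and $\frac{\beta-4}{\beta}>\frac{1}{5}$ (which guarantee the survival of Property \ref{du87ejh14} of Definition \ref{ccn512rf1x} after one replaces $\eta$ by the shorter prefix $\overline\eta$, and which are the only place where the assumption $\alpha\geq \resalpha$ is actually consumed) have already been absorbed into the two propositions. Thus the proof of Lemma \ref{dyu77bvqsj} is just a one-line dichotomy followed by a citation.
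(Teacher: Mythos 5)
Your proposal is correct and is essentially the paper's own proof: a dichotomy on whether $r^{\beta}\in\Suffix(\eta xuy)$, with the two cases discharged by Proposition \ref{vv8d9duij33} and Proposition \ref{nj8b3vdjlso} respectively. If anything, your version is slightly cleaner, since the paper phrases its second case as ``$r^{\beta}\not\in\Suffix(\eta xuy)$ and $r^{\beta}\in\Suffix(w\eta xuy)$'', an extra condition that Proposition \ref{nj8b3vdjlso} does not require and that leaves the paper's case split looking less obviously exhaustive than yours.
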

\begin{proof}
We distinguish two cases:
\begin{itemize}
\item $r^{\beta}\in\Suffix(\eta xuy)$. In this case the lemma follows from Proposition \ref{vv8d9duij33}.
\item $r^{\beta}\not\in\Suffix(\eta xuy)$ and $r^{\beta}\in\Suffix(w\eta xuy)$. In this case the lemma follows from Proposition \ref{nj8b3vdjlso}.
\end{itemize}
This completes the proof.
\end{proof}

\section{Transition words}

The first theorem of this section shows that if $(s,\sigma ,w,\eta ,x,\epsilon)\in\Delta$ and $t$ is a right infinite $\alpha$-power free word with no occurrence of the letter $x$ then there is a bi-infinite $\alpha$-power free word containing the factor $w$ and having only a finite number of occurrences of $x$.
\begin{theorem}
\label{d78fju5e4}
If $(s,\sigma ,w,\eta ,x,\epsilon)\in\Delta$, $t\in\pfl_{k,\alpha}^{\mathbb{N},R}$, and $x\not\in\Factor(t)$ then there is $\widehat \eta \in\Prefix(\eta )$ such that $s\sigma w\widehat \eta xt\in\pfl_{k,\alpha}^{\mathbb{Z}}$.
\end{theorem}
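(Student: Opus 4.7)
The plan is to construct $\widehat\eta$ as the stable limit of a sequence of prefixes of $\eta$ obtained by appending the letters of $t$ one at a time, keeping the invariant that the $6$-tuple built so far lies in $\Delta$.

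Concretely, write $t=t_1t_2\cdots$ and set $u_n=t_1t_2\cdots t_n$ (so $u_0=\epsilon$), $\eta_0=\eta$. The hypothesis gives $(s,\sigma,w,\eta_0,x,u_0)\in\Delta$. Suppose inductively that $(s,\sigma,w,\eta_n,x,u_n)\in\Delta$, and put $y=t_{n+1}$. Since $u_ny\in\Prefix(t)$ and $t\in\pfl_{k,\alpha}^{\mathbb{N},R}$, we have $u_ny\in\pfl_{k,\alpha}$; since $x\notin\Factor(t)$, we have $y\neq x$. If $\Pi(s,\sigma,w,\eta_n,x,u_n,y)=\emptyset$, then Lemma~\ref{rry7idxq2} lets us take $\eta_{n+1}=\eta_n$; otherwise Lemma~\ref{dyu77bvqsj} supplies a strictly shorter $\eta_{n+1}\in\Prefix(\eta_n)\subseteq\Prefix(\eta)$. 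In either case $(s,\sigma,w,\eta_{n+1},x,u_{n+1})\in\Delta$ with $u_{n+1}=u_ny\in\Prefix(t)$, closing the induction.

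Because $(|\eta_n|)_{n\geq 0}$ is a non-increasing sequence of non-negative integers, it stabilizes from some index $N$ onward, and since each $\eta_{n+1}$ is a prefix of $\eta_n$, equality of lengths forces equality of words. Let $\widehat\eta$ be the common value of $\eta_n$ for $n\geq N$. For every $n\geq N$, Property~\ref{du77bxn21b} of Definition~\ref{ccn512rf1x} applied to $(s,\sigma,w,\widehat\eta,x,u_n)\in\Delta$ yields $s\sigma w\widehat\eta xu_n\in\pfl_{k,\alpha}^{\mathbb{N},L}$. Since every finite factor of the bi-infinite word $s\sigma w\widehat\eta xt$ is a factor of $s\sigma w\widehat\eta xu_n$ for all sufficiently large $n$, every such factor is $\alpha$-power free, and therefore $s\sigma w\widehat\eta xt\in\pfl_{k,\alpha}^{\mathbb{Z}}$.

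The substantive work is already packaged in Lemma~\ref{dyu77bvqsj}; the only delicate point is that the $\Gamma$-condition $|\eta|\geq(\alpha+1)\alpha^{|w|-|u|}|w|$ must survive the shrinking of $\eta$ one letter at a time throughout the critical phase $|u|\leq|w|$. This is exactly where the exponent $\alpha^{|w|-|u|}$ was designed to provide the slack consumed by the factor-of-$\alpha$ loss in inequalities~(\ref{sbvx7778zx3}) and (\ref{dyuejh77d8f}); once $|u|>|w|$ the condition is vacuous and no further control on the length of $\eta_n$ is needed, so the iteration runs forever and the argument closes.
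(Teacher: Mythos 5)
Your proof is correct and follows essentially the same route as the paper's: the same induction over prefixes of $t$, invoking Lemma~\ref{rry7idxq2} when $\Pi=\emptyset$ and Lemma~\ref{dyu77bvqsj} otherwise, followed by the same stabilization argument on the non-increasing sequence of prefixes of $\eta$. Your closing observation that every finite factor of $s\sigma w\widehat\eta xt$ lies in some $s\sigma w\widehat\eta xu_n$ makes explicit a step the paper leaves implicit, but the argument is the same.
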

\begin{proof}
Given $i\in\mathbb{N}\cup\{0\}$, let $\phi(i)\in\Prefix(t)\cap\Sigma_k^i$ be the prefix of $t$ of length $i$.  We have that $\phi(0)=\epsilon$. Let $y_j\in\Sigma_k$ be such that $\phi(j-1)y_j=\phi(j)$.

Let $\omega(0)=\eta$. We have that $(s,\sigma ,w,\omega(0),x,\phi(0))\in\Delta$.
Given $j\in\mathbb{N}$, we define $\omega(j)\in\Prefix(\omega(j-1))$ as follows:

\begin{itemize}
\item
If $\Pi(s,\sigma, w,\omega(j-1),x,\phi(j-1),y_j)=\emptyset$ then we define that $\omega(j)=\omega(j-1)$. Lemma \ref{rry7idxq2} implies that $(s,\sigma ,w,\omega(j),x,\phi(j))\in\Delta$.
\item
If $\Pi(s,\sigma, w,\omega(j-1),x,\phi(j-1),y_j)\not=\emptyset$ then Lemma \ref{dyu77bvqsj} implies that there is $\overline\eta\in\Prefix(\omega(j-1))$ such that $(s,\sigma ,w,\rho,x,\phi(j))\in\Delta$. We define that $\omega(j)=\overline \eta$.
\end{itemize}
It follows that $\omega(j)$ is well defined for all $j\in\mathbb{N}\cup\{0\}$ and also it follows that $(s,\sigma ,w,\omega(j),x,\phi(j))\in\Delta$ for all $j\in\mathbb{N}\cup\{0\}$.

Since $\omega(0)=\eta $ is a finite word and since $\omega(j)\in\Prefix(\omega(j-1))$ for all $j\in\mathbb{N}$, obviously there is $m\in\mathbb{N}$ such that $\omega(i)=\omega(m)$ for all $i\geq m$. Consequently we have that $(s,\sigma ,w,\omega(m),x,\phi(i))\in\Delta$ for all $i\geq m$. Let $\widehat \eta=\omega(m)$. We conclude that $s\sigma w\widehat \eta xt\in\pfl_{k,\alpha}^{\mathbb{Z}}$.

This completes the proof.
\end{proof}

If a factor $w$ is recurrent in a bi-infinite word $v$, then it is possible that $w$ is recurrent only on the ``left or the right side'' of $v$. For the next proposition we will need that the letter $x$ is recurrent ``on the right side''. We define formally this notion. 
\begin{definition}
\label{dxhb7d6djwb}
Suppose $v\in\pfl_{k,\alpha}^{\mathbb{Z}}$ and $w\in\Factor(v)\setminus\{\epsilon\}$. If there are $v_1\in\pfl_{k,\alpha}^{\mathbb{N},L}$ and $v_2\in\pfl_{k,\alpha}^{\mathbb{N},R}$ such that $v=v_1v_2$ and $w\in\Factor_r(v_2)$ then we say that $w$ is \emph{on-right-side recurrent} in $v$.
\end{definition}
\begin{remark}
Note in Definition \ref{dxhb7d6djwb} that no restriction is imposed on the recurrence of $w$ in $v_1$. It means that $w$ may be also recurrent also in $v_1$.
\end{remark}

If $v$ is a bi-infinite $\alpha$-power free word and the letter $x$ is on-right-side recurrent in $v$, then we prove that the main result of the current article holds for $v$.
\begin{proposition}
\label{udi8ds0fne5df}
If $v\in\pfl_{k,\alpha}^{\mathbb{Z}}$, $w\in\Factor(v)\setminus\{\epsilon\}$, $x\in\Factor(w)\cap\Sigma_k$, and $x$ is on-right-side recurrent in $v$ then there is $\widehat v\in\pfl_{k,\alpha}^{\mathbb{Z}}$ such that $w\in\Factor(\widehat v)$ and $x\not\in\Factor_r(\overline v)$.
\end{proposition}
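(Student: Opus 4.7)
The plan is to apply Theorem~\ref{d78fju5e4} to a $\Delta$-tuple $(s',\sigma,W,\eta,x,\epsilon)$ built from the data of the proposition, together with a right-infinite $\alpha$-power free word $t$ with $x\notin\Factor(t)$ (which exists by Lemma~\ref{dy77ejhfiffu}). The resulting bi-infinite word $\widehat v=s'\sigma W\widehat\eta xt$ will contain $w$ as a factor of $W$, and all occurrences of $x$ in $\widehat v$ will lie inside the finite block $\sigma W\widehat\eta x$, so $x$ will be non-recurrent in $\widehat v$.

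First, I would use the on-right-side recurrence hypothesis to fix a decomposition $v=v_1^{*}v_2^{*}$ with $v_1^{*}\in\pfl_{k,\alpha}^{\mathbb{N},L}$, $v_2^{*}\in\pfl_{k,\alpha}^{\mathbb{N},R}$, and $x\in\Factor_{\rec}(v_2^{*})$. Shifting the cut to the right removes only finitely many $x$-occurrences from $v_2^{*}$, so I may arrange in addition that $w\in\Factor(v_1^{*})$. I would then produce a left-infinite $\alpha$-power free word $\widetilde v_1$ containing $w$ in which $x$ is non-recurrent: if $x\notin\Factor_{\rec}(v_1^{*})$ already, take $\widetilde v_1:=v_1^{*}$; otherwise apply Proposition~\ref{nb3rdyd887j} to $v_1^{*}$ with $z$ a suffix of $v_1^{*}$ long enough to contain $w$ and with $s$ a left-infinite $\alpha$-power free word not containing $x$ (obtained by reversing the right-infinite word provided by Lemma~\ref{dy77ejhfiffu}), which yields $\widetilde v_1=su$ for a finite $u$ with $z\in\Suffix(\widetilde v_1)$.

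Next I would construct the $\Delta$-tuple. Because $w$ contains $x$ and $x$ has only finitely many occurrences in $\widetilde v_1$, the factor $w$ itself has only finitely many occurrences in $\widetilde v_1$. Let $p_1$ be the starting position of the leftmost such occurrence and let $q_1$ be the position of the leftmost $x$-occurrence in $\widetilde v_1$. Set $s'$ to be the left-infinite part of $\widetilde v_1$ ending just before position $\min(p_1,q_1)$, let $\sigma$ be the (possibly empty) factor from position $\min(p_1,q_1)$ through position $p_1-1$, and let $W=w$ placed at the leftmost $w$-occurrence. These choices give $x\notin\Factor(s')$, and $\occur(s'\sigma W,W)=1$ by the leftmostness of $W$, since any other occurrence of $W$ in $s'\sigma W$ would have starting position strictly less than $p_1$. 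It remains to extend $\widetilde v_1$ to the right by a finite word $\eta_0 x$ such that $\eta$, defined as the concatenation of the portion of $\widetilde v_1$ past $W$ with $\eta_0$, satisfies $\vert\eta\vert\geq(\alpha+1)\alpha^{\vert W\vert}\vert W\vert$ and $s'\sigma W\eta x\in\pfl_{k,\alpha}^{\mathbb{N},L}$. In the sub-case $\widetilde v_1=v_1^{*}$ this extension is read off from $v$ directly: take $\eta_0 x$ to be a sufficiently long prefix of $v_2^{*}$ terminating at one of its infinitely many $x$-occurrences, so that $\widetilde v_1\eta_0 x$ is a prefix of $v$ and is automatically $\alpha$-power free. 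In the sub-case $\widetilde v_1=su$, one must build a compatible extension using the extendability machinery for $\alpha$-power free languages on $\widetilde\Upsilon$ developed in~\cite{10.1007/978-3-030-48516-0_22}.

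Finally, applying Theorem~\ref{d78fju5e4} to this $\Delta$-tuple and to $t$ delivers $\widehat v=s'\sigma W\widehat\eta xt\in\pfl_{k,\alpha}^{\mathbb{Z}}$ with $w=W\in\Factor(\widehat v)$ and $x\notin\Factor_{\rec}(\widehat v)$, as required. The main technical obstacle is the right-extension step of the $\Delta$-tuple construction in the sub-case $\widetilde v_1=su$: here one cannot simply read the extension off $v$, and one must simultaneously arrange the length bound on $\eta$, the placement of a trailing $x$, and the global $\alpha$-power freeness of $s'\sigma W\eta x$ while preserving the uniqueness condition on $W$.
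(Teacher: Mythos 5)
There is a genuine gap, and you have located it yourself: the right-extension step in the sub-case where Proposition~\ref{nb3rdyd887j} has been invoked is not carried out, and no lemma in the paper supplies the ``extendability machinery'' you appeal to. After replacing $v_1^{*}$ by $su$, the word $su$ is no longer a factor of $v$, so you cannot read a long $\alpha$-power-free continuation $\eta_0x$ off of $v_2^{*}$; and constructing from scratch a finite extension of a left-infinite power-free word that simultaneously (i) is long enough to meet the bound $\vert\eta\vert\geq(\alpha+1)\alpha^{\vert W\vert}\vert W\vert$, (ii) terminates in the letter $x$, and (iii) keeps the whole word in $\pfl_{k,\alpha}^{\mathbb{N},L}$ is a nontrivial problem that the proposition is not entitled to assume solved. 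As written, the proof is incomplete in exactly the case that needs Proposition~\ref{nb3rdyd887j}.

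The paper's proof avoids this obstacle by reversing the order of operations, and you can repair your argument the same way. Instead of taking $z$ to be a suffix of the left part just long enough to contain $w$ and postponing the construction of $\eta$ and the trailing $x$, first use the on-right-side recurrence of $x$ to write $v=\widetilde v\,w\,\widetilde\eta\,x\,\ddot v$ where the occurrence of $x$ is chosen so far to the right that $\vert\widetilde\eta\vert\geq(\alpha+1)\alpha^{\vert w\vert}\vert w\vert$ already holds, i.e.\ $(w,\widetilde\eta,\epsilon)\in\Gamma$. Then apply Proposition~\ref{nb3rdyd887j} to the left-infinite word $\widetilde v\,w\,\widetilde\eta\,x$ with $z=w\widetilde\eta x$ as the distinguished suffix and with $\overline s=t^R$ as the $x$-free left-infinite word. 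The proposition returns $\overline su$ with $w\widetilde\eta x\in\Suffix(\overline su)$, so the long $\eta$ and the terminal $x$ are preserved automatically and nothing further needs to be built; one then only chooses the cut points $s,\sigma$ (as in your leftmost-occurrence argument, which is fine) to get $(s,\sigma,w,\eta,x,\epsilon)\in\Delta$ and concludes with Theorem~\ref{d78fju5e4} exactly as you intend.
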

\begin{proof}
It is clear that there are $\widetilde v\in\pfl_{k,\alpha}^{\mathbb{N},L}$, $\ddot v\in\pfl_{k,\alpha}^{\mathbb{N},R}$ and $\widetilde \eta \in\Sigma_k^*$ such that $v=\widetilde vw\widetilde \eta x\ddot v$ and $(w,\widetilde \eta ,\epsilon)\in\Gamma$. Just note that $x$ is on-right-side  recurrent in $v$; hence $\widetilde \eta$ can be chosen to be longer than any arbitrarily chosen positive number.

Let $t$ be a right infinite $\alpha$-power free word on the alphabet $\Sigma_k\setminus\{x\}$. Since $\alpha\geq \resalpha$, Lemma \ref{dy77ejhfiffu} asserts that $t$ exists.
\begin{itemize}
\item If $x\not\in\Factor_r(\widetilde v)$, then let $\overline s=\widetilde v$ and $u=w\widetilde \eta x$.
\item If $x\in\Factor_r(\widetilde v)$, then let $\overline s=t^R$. Clearly $\overline s$ is a left infinite $\alpha$-power free and $x\not\in\Factor(s)$. 
Proposition \ref{nb3rdyd887j} implies that there is a finite word $u$ such that $\overline su\in\pfl_{k,\alpha}^{\mathbb{N},L}$, and $w\widetilde\eta x\in\Suffix(\overline su)$. 
\end{itemize}

Then, obviously, there are $s\in\pfl_{k,\alpha}^{\mathbb{N},L}$ and $\sigma, \eta \in\Sigma_k^*$ such that \[(s,\sigma, w,\eta, x)\in\Delta\quad \mbox{ and }\quad s\sigma w\eta x=\overline su\mbox{.}\] Note that since $x\in\Factor(w)$ and $x\not\in\Factor_r(\overline s)$,  Property \ref{ddhy7vzlp} of Definition \ref{ccn512rf1x} is easy to be asserted by a proper choice of $s$, $\sigma$, and $\eta$.

Theorem \ref{d78fju5e4} implies that there is $\overline \eta$ such that $s\sigma w\overline \eta xt\in\pfl_{k,\alpha}^{\mathbb{Z}}$. Let $\widehat  v=s\sigma w\overline \eta xt$. Clearly $w\in\Factor(\widehat v)$ and $x\not\in\Factor_r(\widehat v)$.
This completes the proof. 
\end{proof}

Now, we can step to the main theorem of the current article.
\begin{theorem}
If $v\in\pfl_{k,\alpha}^{\mathbb{Z}}$, $w\in\Factor(v)\setminus\{\epsilon\}$, then there there are $\overline v\in\pfl_{k,\alpha}^{\mathbb{Z}}$ and $x\in\Sigma_k$ such that $w\in\Factor(\overline v)$ and $x\not\in\Factor_r(\overline v)$.
\end{theorem}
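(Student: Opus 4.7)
The plan is to reduce the theorem to Proposition \ref{udi8ds0fne5df} through a dichotomy on whether $v$ already has a non-recurrent letter, with a reflection argument handling one symmetric subcase.

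First I would check whether there exists a letter $x\in\Sigma_k$ with $x\notin\Factor_{\rec}(v)$, i.e., a letter occurring only finitely often in $v$. If so, set $\overline v = v$; then $w\in\Factor(v)=\Factor(\overline v)$ and $x\notin\Factor_r(\overline v)$ hold immediately.

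Otherwise, every letter of $\Sigma_k$ is recurrent in $v$. Since $w\neq\epsilon$, pick any $x\in\Factor(w)\cap\Sigma_k$. Because $x$ has infinitely many occurrences in the bi-infinite word $v$, at least one of the following must hold: the occurrence positions of $x$ are cofinal toward $+\infty$, or toward $-\infty$. In the first case, any splitting $v=v_1v_2$ with $v_1\in\pfl_{k,\alpha}^{\mathbb{N},L}$ and $v_2\in\pfl_{k,\alpha}^{\mathbb{N},R}$ makes $x$ recurrent in $v_2$, so $x$ is on-right-side recurrent in $v$ per Definition \ref{dxhb7d6djwb}, and Proposition \ref{udi8ds0fne5df} produces the desired $\overline v$ directly. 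In the remaining case, the occurrences of $x$ in $v$ are cofinal only toward $-\infty$, hence $x$ is on-right-side recurrent in the reverse word $v^R$; applying Proposition \ref{udi8ds0fne5df} to $v^R$, $w^R$, and $x$ yields $\widetilde v\in\pfl_{k,\alpha}^{\mathbb{Z}}$ with $w^R\in\Factor(\widetilde v)$ and $x\notin\Factor_r(\widetilde v)$. Taking $\overline v = \widetilde v^R$ completes the argument, since reversal preserves bi-infinite power-freeness, the factor relation, and occurrence counts.

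I do not anticipate any genuine difficulty in this final step: all the heavy lifting has been carried out in the $\Delta$-machinery of the previous section and packaged into Proposition \ref{udi8ds0fne5df}. The theorem itself is essentially a clean case split combined with the symmetry $v\mapsto v^R$.
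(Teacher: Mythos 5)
Your proposal is correct and follows essentially the same route as the paper: the same dichotomy on whether some letter is already non-recurrent, the same choice of $x\in\Factor(w)\cap\Sigma_k$ in the recurrent case, and the same reduction to Proposition \ref{udi8ds0fne5df} either directly or via the reversal $v\mapsto v^R$. Your added justification that recurrence of $x$ in the bi-infinite word forces its occurrences to be cofinal toward $+\infty$ or $-\infty$ is a slightly more explicit version of the paper's claim that $x$ is on-right-side recurrent in $v$ or in $v^R$.
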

\begin{proof}
\begin{itemize}
\item
If $\Sigma_k\not\subset\Factor_r(v)$ then let $x\in\Sigma_k\setminus\Factor_r(v)$ and let $\overline v=v$.
\item
If $\Sigma_k\subset\Factor_r(v)$ then let $x\in\Factor(w)\cap\Sigma_k$. We have that $x\in\Factor_r(v)$. It follows that either $x$ is on-right-side  recurrent in $v$ or $x$ is on-right-side recurrent in $v^R$. 
\begin{itemize}
\item
If $x$ is on-right-side  recurrent in $v$ then 
 Proposition \ref{udi8ds0fne5df} implies that there is $\widehat v$ such that $w\in\Factor(\widehat v)$ and $x\not\in\Factor_r(\widehat v)$. Let $\overline v=\widehat v$.
\item
If $x$ is on-right-side  recurrent in $v^R$. 
 Proposition \ref{udi8ds0fne5df} implies that there is $\widehat v$ such that $w^R\in\Factor(\widehat v)$. Let $\overline v=\widehat v^R$. Then obviously $w\in\Factor(\overline v)$ and $x\not\in\Factor_r(\overline v)$.
\end{itemize}
\end{itemize}
For every case we showed $\overline v\in\pfl_{k,\alpha}^{\mathbb{Z}}$ and $x\in\Sigma_k$ such that $w\in\Factor(\overline v)$ and $x\not\in\Factor_r(\overline v)$. 
This ends the proof.
\end{proof}

\section*{Acknowledgments}
This work was supported by the Grant Agency of the Czech Technical University in Prague, grant No. SGS20/183/OHK4/3T/14.

\bibliographystyle{siam}
\IfFileExists{biblio.bib}{\bibliography{biblio}}{\bibliography{../!bibliography/biblio}}

\end{document}